\newcommand{\I}{\mathbb{I}}
\newcommand\ph{\textcolor{black}}
\newcommand{\BL}{\textcolor{black}}
\newtheorem{lemma}{Lemma}
\newtheorem{remark}{Remark}
\newtheorem{proposition}{Proposition}
\begin{document}
\noindent Draft of \today
\onehalfspacing
\begin{center}

\textbf{\sc \Large  Exponentially Titled Empirical Distribution  Function }\\
\textbf{\sc \Large for Ranked Set Samples}

\vspace{0.5cm}

{Saeid Amiri$^{a,}$\footnote{Corresponding author: saeid.amiri1@gmail.com}, Mohammad Jafari Jozani$^{b}$,  and Reza Modarres$^c$ 
}

\vspace{0.3cm}

{\it $^a$ Department of Statistics, University of Nebraska-Lincoln, Lincoln, Nebraska, USA}

{\it $^b$ Department of Statistics,  University of Manitoba,
Winnipeg, MB,  CANADA, R3T 2N2

{\it $^c$ Department of Statistics, The George Washington University, Washington DC, USA }}
\end{center}

\begin{abstract}

We study nonparametric estimation of the distribution function (DF) of  a continuous random variable based on a ranked set sampling design   using the  exponentially tilted (ET) empirical likelihood method.   We propose    ET estimators of the DF and use them to construct new resampling algorithms for unbalanced ranked set samples.  We explore the properties of the proposed algorithms.  For a  hypothesis testing problem about the underlying population mean, we show that the bootstrap tests based on the ET estimators of the DF  are asymptotically normal and exhibit a small  bias of order $O(n^{-1})$. 
We illustrate the methods and evaluate the finite sample performance of the  algorithms  under both perfect and  imperfect ranking schemes using  a  real data set and several  Monte Carlo simulation studies. We compare the performance of the  test statistics based on the  ET  estimators  with those based on the empirical likelihood estimators.   \\
\end{abstract}

\noindent {\bf Keywords}:   Distribution function;  Exponential tilting;  Imperfect ranking; Ranked set sample.


\section{Introduction}

Ranked set   sampling (RSS)  is a powerful and  cost-effective  data collection  technique that is often used to collect more representative samples from the underlying population when  a small number of sampling units can be fairly accurately ordered  without taking actual measurements on the variable of interest.   RSS is most effective when obtaining exact measurement on the variable of interest is very costly,   but  ranking the sampling units  is relatively  inexpensive. 
RSS finds applications  in  industrial statistics, 
environmental and ecological studies as well as  medical sciences.
For recent overviews of the theory and applications of RSS and  its variations see Wolfe (2012) and Chen et al.\ (2004).

Ranked set samples can be either balanced or unbalanced. An unbalanced ranked set sample (URSS) is one in which the ranked order statistics are not quantified the same number of times.  To obtain an URSS of size $n$ from the   underlying population  we proceed as follows. Let $n$ sets of sampling units, each of size $k$, be  randomly chosen from the population using  a simple random sampling  (SRS) technique. The units of each set are ranked  by any means other than the actual quantification of the variable of interest. Finally, one and only one unit in each ordered set  with a pre-specified rank is measured.
Let $m_r$ be the number of measurements on units with rank $r$,  $r\in\{1, \ldots, k\}$  such that $n= \sum_{r=1}^k m_r$. Suppose  $X_{(r)j}$ denotes the measurement on the $j$th unit with rank $r$. The resulting  URSS of size $n$ from the underlying population is denoted by    ${\bf X}_{URSS}=\{\mathcal{X}_1, \ldots, \mathcal{X}_n\}$, where 
the elements of the $r$th row $\mathcal{X}_r= (X_{(r)1}, X_{(r)2}, \ldots, X_{(r)m_{r}})$ are independently and identically distributed (i.i.d.) from  $F_{(r)},  r=1, \ldots, k$ and  $F_{(r)}$ is the DF  of the $r$th order statistic. \ph{ Moreover, 
$X_{(r)j}$s are independent  for $r=1, \ldots, k$ and $j=1, \ldots, m_r$.}  Note that if $m_r=m$, $r=1, \ldots, k$, then URSS reduces to  the balanced  RSS. The DF of URSS is  
\begin{eqnarray}\widehat F_{q_n}(t)=\frac{1}{n}\sum_{r=1}^{k}
\sum_{j=1}^{m_r}I(X_{(r)j}\leq t)=\sum_{r=1}^k q_{m_r}\widehat F_{(r)}(t),
\label{rsss3}
\end{eqnarray}
where $n=\sum m_r$ and $q_{m_r}=m_r/n$.   As it is shown in Chen et al.\ (2004), 
when $n\longrightarrow \infty$, and $q_{m_r}\longrightarrow q_r$, for  $r=1,\ldots,k$, we have  $\widehat F_{q_n}(t)\longrightarrow F_q(t)$,  where
\begin{eqnarray}
  F_{q}(t)=\sum_{r=1}^k q_{r} F_{(r)}(t). \label{rss7}
\end{eqnarray}
One can easily see that $F_{q}(t)$ is not equal to the underlying DF $F(t)$,  unless   $q_r= 1/k$, $r=1, \ldots, k$, showing that the EDF based on the URSS data   does not provide a  good  estimate of  the underlying distribution $F$.  The properties of the   EDF of  the balanced and unbalanced RSS are  studied in Stokes and Sager (1988) as well as Chen et al.\ (2004).  

\BL{ In this paper, we  use the empirical likelihood method as a nonparametric approach for estimating $F$.  To this end, we  propose two methods to estimate $F$ using the  exponentially tilted (ET)   technique. The proposed estimators  can be used   as standard tools for practitioners to estimate the  standard error of any well-defined statistic based on RSS or URSS data  and  to make  inferences about the characteristics of interest of the underlying population.   Another interesting problem in this direction is to develop efficient resampling techniques for  URSS data,  as in many cases the exact or the asymptotic distribution of the statistics based on URSS data   are  not available or they are  very  difficult to obtain (e.g., Chen et al., 2004). Akin to the methods of  Modarres et al.\ (2006) and Amiri et al.\ (2014), the new  ET estimators  of $F$  are used to construct new  resampling  techniques for URSS data.  We study different properties of the proposed algorithms. For a hypothesis  testing  problem, about the underlying population mean,  we show that  the bootstrap tests based  on  the ET estimators  are asymptotically normal and exhibit a small bias of order $O(n^{-1})$ which are  desirable properties.  }

The outline of the paper is as follows. In  Section \ref{ET},  we present  ET estimators of $F$   based on the URSS data. 
 Section \ref{BT-URSS} considers two  methods  for resampling  RSS and URSS data based on the ET estimators of $F$. We provide justifications for validity of these methods for a hypothesis  testing problem about the population mean. 
 Section \ref{simulation} describes a simulation study to compare the finite sampling properties of the proposed methods with parametric bootstrap and some existing resampling techniques for testing a hypothesis about the population mean. We consider both perfect and  imperfect  ranking  scenarios,  three different  distributions and five RSS designs.  We compare the performance of  our proposed methods with the one based on the empirical likelihood method studied in Liu et al. (2009) as well as Baklizi (2009).  In Section \ref{real data}, we apply our methods for a testing hypothesis problem using a real data set  consisting of the birth weight and seven-month weight of  224 lambs along with
the mother's weight at time of mating.   Section \ref{concluding}  provides some concluding remarks.  


\section{ {Exponential Tilting of DF}}\label{ET}
Exponential tilting of an empirical likelihood is a powerful technique in nonparametric statistical inference. 
The impetus of this approach is the use of the estimated DF subject to some constraints rather than the EDF.  
ET methods find applications in computation of bootstrap tail 
probabilities (Efron and Tibshirani, 1993), point estimation (Schennach, 2007),
estimation of the spatial quantile regression (Kostov, 2012), Bayesian treatment of quantile
regression (Schennach, 2005),  small area estimation (Chaudhuri and Ghosh, 2011) and Calibration estimation (Kim, 2010), among others.

Let ${\bf X}= \{ X_1, \ldots, X_n\}$ be a generic sample of size $n$ from $F$  and suppose  $F_n(x)=\sum_{i=1}^n \frac{1}{n} \I(X_i\leq x)$  is   the EDF of ${\bf X}$ which places empirical frequencies (weights) $1/n$ on each $X_i$.  Consider an estimator $ \widetilde F_{p}(x)=\sum_{i=1}^n p_i\I(X_i\leq x)$ of $F$ which assigns weights $ p_i$ instead of $1/n$ to  each $X_i$. To obtain the ET estimator of $F$,  we  minimize an aggregated  distance between the empirical weights $1/n$ and $p_i$ subject to some constraints on the $p_i$'s. More specifically,  one chooses a distance  $d(\widetilde F_{ p},F_n)=\sum_{i=1}^{n}
d(p_{i},\frac 1n)$ and minimizes  $d(\widetilde  F_{ p},F_n)$ subject to $\sum_{i=1}^n p_i=1$ and some other constrains such as $ g({\bf X}, \theta_0)= \sum_{i=1}^n p_i g(X_i, \theta_0)=0$, using the following Lagrangian multiplier method
\begin{eqnarray} \label{lagrange}
d(\widetilde F_{p},F_{n})-\lambda
g({\bf X},\theta_0)-\alpha \big(\sum_{i=1}^{n} p_{i}-1\big),
\end{eqnarray} 
 where $g({\bf X}, \theta_0)$ is often imposed under the null hypothesis in a testing problem  or any other  conditions that one needs to account for in practice. Note that the minimization in \eqref{lagrange} can also be done by minimizing the distance between $\widetilde  F_{ p}(x)$ and any target estimator $F_{\widehat p}(x)=\sum_{i=1}^n\widehat{p}_i\I(X_i\leq x)$ other than the EDF $F_n(t)$.

The choice of the discrepancy function $d(\cdot,\cdot)$ for the aggregated loss $d(\widetilde F_p,F_n)$  in \eqref{lagrange} leads to different ET estimators of  $F$. Since  $F_n(x)$ is the  nonparametric maximum likelihood estimator of $F$ under the Kullback-Leibler  distance subject to the restriction $\sum_{i=1}^n p_i=1$,  one often uses
$$d(\widetilde F_{ p},F_{\widehat p})= \sum_{i=1}^n p_i \log \left(\frac{p_i}{\widehat p_i} \right).$$
We   propose two ET estimators of $F$ based on  URSS data  with sample size $n=\sum_{r=1}^k m_r$ where $k$ is the set size. \BL{The ET estimators are then used to propose new bootstrapping algorithms from URSS data}.

%
%

\subsection{ Exponential Tilting of  All  Observations (EAT)}
\BL{In this section, we propose  our first  ET estimator of $F$  which is later used to  resample   from  within each row  of ${\bf X}_{URSS}=\{ X_{(r)j}, r=1, \ldots, k; j=1, \ldots, m_r\}$. 
 The idea behind the first ET estimator of $F$, for bootstrapping  ${\bf X}_{URSS}$, is to find an estimator 
  \begin{align}\label{first}\widetilde F_{p}(x)= \sum_{r=1}^k \sum_{j=1}^{m_r} p_{(r)j} \I(X_{(r)j}\leq x),
  \end{align}
subject to the constraints   
\begin{align}\label{const1}
\sum_{r=1}^k \sum_{j=1}^{m_r} p_{(r)j}=1\quad \text{and}\quad \sum_{r=1}^{k} \sum_{j=1}^{m_r} p_{(r)j} X_{(r)j}=\overline{{X}}_{URSS},
\end{align}
 where $\overline{X}_{URSS}=\displaystyle\frac{1}{n}\sum_{r=1}^{k}\sum_{j=1}^{m_r} X_{(r)j}$.
  \begin{lemma} \label{lem1}
  Let ${\bf X}_{URSS}=\{ X_{(r)j}, r=1, \ldots, k; j=1, \ldots, m_r\}$ be a URSS sample of size $n$ from the underlying population $F$ when the set size is $k$ and $X_{(r)j }\in {\bf R}$ is the $r$-th order statistic in a simple random sample of size $k$ from $F$. The optimum values of $p_{(r)j}$ in \eqref{first} under the constraints \eqref{const1}  are given by 
  \begin{eqnarray}
  \tilde p_{(r)j}=\frac{\exp(\lambda X_{(r)j})}{\displaystyle\sum_{r=1}^k\sum_{j=1}^{m_r} \exp(\lambda X_{(r)j})}, \quad r=1,\ldots, k;~ j=1, \ldots, m_r,
  \label{eta1}
  \end{eqnarray}
  where $\lambda$ is obtained from $\sum_{r=1}^k \sum_{j=1}^{m_r} p_{(r)j} X_{(r)j}=\overline{X}_{URSS}$.
  \end{lemma}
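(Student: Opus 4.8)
The plan is to read the statement as a convex optimization problem and solve it by the Lagrange multiplier recipe already set up in \eqref{lagrange}. Following the paper's own choice of discrepancy, I would take the aggregated Kullback--Leibler distance between the tilted weights and the uniform empirical weights $1/n$, namely $d(\widetilde F_{p},F_n)=\sum_{r=1}^k\sum_{j=1}^{m_r} p_{(r)j}\log(n\,p_{(r)j})$, and minimize it over the $p_{(r)j}$ subject to the two linear constraints in \eqref{const1}. Because this objective is strictly convex in the weights and the feasible set is the intersection of two affine hyperplanes with the probability simplex, the Karush--Kuhn--Tucker conditions are both necessary and sufficient, so locating the unique stationary point of the Lagrangian delivers the global minimizer. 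This convexity is what lets me pass from a critical point to \emph{the} optimum, and I regard it as the only conceptual (if mild) point requiring comment.

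First I would form the Lagrangian, matching the sign convention of \eqref{lagrange},
$$\mathcal{L}=\sum_{r=1}^k\sum_{j=1}^{m_r} p_{(r)j}\log(n\,p_{(r)j})-\lambda\Big(\sum_{r=1}^k\sum_{j=1}^{m_r} p_{(r)j} X_{(r)j}-\overline{X}_{URSS}\Big)-\alpha\Big(\sum_{r=1}^k\sum_{j=1}^{m_r} p_{(r)j}-1\Big),$$
and differentiate with respect to a generic weight. Since $\tfrac{d}{dp}\big[p\log(np)\big]=\log(np)+1$, the stationarity condition $\partial\mathcal{L}/\partial p_{(r)j}=\log(n\,p_{(r)j})+1-\lambda X_{(r)j}-\alpha=0$ gives $p_{(r)j}=C\exp(\lambda X_{(r)j})$ with $C=n^{-1}\exp(\alpha-1)$ a constant free of the index $(r,j)$. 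Substituting this form into the normalization constraint $\sum_{r,j} p_{(r)j}=1$ solves out $C$ (equivalently $\alpha$) and produces exactly \eqref{eta1}, leaving $\lambda$ as the multiplier still tied to the mean constraint.

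It then remains only to fix $\lambda$ through the scalar equation $\sum_{r,j}\tilde p_{(r)j} X_{(r)j}=\overline{X}_{URSS}$ quoted in the statement. I would observe that $\lambda\mapsto\sum_{r,j}\tilde p_{(r)j}(\lambda)X_{(r)j}$ is the derivative of the strictly convex log-partition function $\lambda\mapsto\log\sum_{r,j}\exp(\lambda X_{(r)j})$, hence strictly increasing, so a root exists and is unique whenever $\overline{X}_{URSS}$ lies in the interior of the range of the $X_{(r)j}$, which holds since it is a proper average of them. Finally, it is worth noting that $\lambda=0$ already satisfies the mean constraint, because the uniform weights $1/n$ return $\overline{X}_{URSS}$ by definition; uniqueness of the root therefore identifies the solution explicitly for this particular centering, while the general exponential form of \eqref{eta1} is what becomes essential once the constraint is replaced by a hypothesized mean in the bootstrap tests developed later.
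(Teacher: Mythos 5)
Your proposal is correct and follows essentially the same route as the paper: the paper's proof is exactly the Lagrangian in \eqref{LA} (Kullback--Leibler distance to the uniform weights $1/n$ plus the two constraints), minimized in $p_{(r)j}$, with the algebra left to the reader. You merely fill in what the paper calls ``easily obtain''---the stationarity computation, the convexity argument that upgrades the critical point to the global minimum, and the (correct) observation that for the centering at $\overline{X}_{URSS}$ the solution degenerates to $\lambda=0$, i.e.\ uniform weights, the exponential form becoming substantive only when the constraint is replaced by a hypothesized mean $\mu_0$.
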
  
  \begin{proof}
Using the   Lagrange multipliers method,  and by   minimizing 
\begin{eqnarray}
\sum_{r=1}^k \sum_{j=1}^{m_r} p_{(r)j}\ln \left( \frac{p_{(r)j}}{1/n}\right)+\lambda (\sum_{r=1}^k \sum_{j=1}^{m_r} p_{(r)j} X_{(r)j}-\overline{X}_{URSS})+\alpha(\sum_{r=1}^k \sum_{j=1}^{m_r} p_{(r)j}-1),
\label{LA}
\end{eqnarray}
with respect to $p_{(r)j}$'s,  one can easily obtain    the optimum values in \eqref{eta1}.
  \end{proof}
 In Section \ref{BT-URSS},  we use  $\tilde {F}_p(x)=\sum_{r=1}^k \sum_{j=1}^{m_r} \tilde p_{(r)j} \I(X_{(r)j}\leq x)$ for bootstrapping $X_{URSS}$ instead of the commonly used empirical DF.  It is worth noting that for hypothesis testing  problems about the underlying population mean $\mu$ involving the null hypothesis   $H_0: \mu=\mu_0$, minimization in \eqref{LA} is done subject to the condition $\sum_{r=1}^{k} \sum_{j=1}^{m_k} p_{(r)j} X_{(r)j}=\mu_0$. Using the optimum weights $\widetilde{p}_{(r)j}$ from the ET estimate of $F$, we also propose
$S^2=\sum_{r=1}^k \sum_{j=1}^{m_r} \tilde p_{(r)j} (X_{(r)j}-\overline{X}_{URSS})^2$
to estimate the population variance $\sigma^2$.  
}
%
%

\subsection{Exponential Tilting  of    Rows  (EAR)}
By  the structure of the URSS data,  ${\bf X}_{URSS}$,    we observe that  $X_{(r)1}, \ldots, X_{(r)m_r}$ are i.i.d.\  samples from  $F_{(r)}(\cdot)$, which is the distribution of the $r$-th order statistic \BL{in a simple random sample of size $k$ from $F$. Since 
$$F(t)=\frac{1}{k} \sum_{r=1}^k  F_{(r)}(t), $$
the idea  behind our  next proposed ET  estimator of $F$ is  to estimate each $F_{(r)}$ using $X_{(r)1}, \ldots, X_{(r)m_r}$,} and construct an estimator of $F$ by averaging over these estimators using suitable weights obtained from the Lagrange multipliers method under some constraints. To this end, we work with an estimator of $F$ of the form  
\begin{align}\label{est2}
\widetilde F_{p}(t) = \sum_{r=1}^k p_{(r)} \widehat{F}_{(r)}(t), 
\end{align}
where $\widehat{F}_{(r)}(t)= \frac{1}{m_r} \sum_{j=1}^{m_r} \I(X_{(r)j}\leq t)$ is the EDF of $X_{(r)1}, \ldots, X_{(r)m_r}$. 
 \BL{
  \begin{lemma} \label{lem2}
  Let ${\bf X}_{URSS}=\{ X_{(r)j}, r=1, \ldots, k; j=1, \ldots, m_r\}$ be a URSS sample of size $n$ from $F$ where the set size is $k$  and $\{X_{(r)j}, j=1, \ldots, m_r \}$ are i.i.d.\ samples from $F_{(r)}$ the DF of the $r$-th order statistic of a simple random sample of size $k$ from $F$. Then,  an optimum estimator of $F$ in the form of \eqref{est2} under the constraints   $\sum_{r=1}^{k}p_{(r)}=1$ and $\sum_{r=1}^k p_{(r)} \bar{X}_{(r)}= \overline X_{URSS}$, where $\overline{X}_{(r)}=\frac{1}{m_r}\sum_{j=1}^{m_r} X_{(r)j}$, $r=1, \ldots, k$,  is given by
   \begin{eqnarray}
\tilde {F}_p(x)=\displaystyle\sum_{r=1}^k   \frac{\tilde p_{(r)}}{m_r} \sum_{j=1}^{m_r} \I(X_{(r)j}\leq x)\quad\text{with}\quad   \tilde p_{(r)}=\frac{\exp(\lambda \overline X_{(r)})}{\sum_{r=1}^k \exp(\lambda \overline X_{(r)})}, 
  \label{eta2}
  \end{eqnarray}
  where  $\lambda$ is obtained  from $\sum_{r=1}^k  p_{(r)} \overline X_{(r)}=\overline{X}_{URSS}$. 
  \end{lemma}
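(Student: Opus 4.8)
The plan is to mirror the argument of Lemma \ref{lem1}, but carried out at the level of the $k$ rows rather than the $n$ individual observations. The first thing I would do is pin down the objective function being minimized. Because the target decomposition is $F(t)=\frac{1}{k}\sum_{r=1}^k F_{(r)}(t)$, the natural reference is the estimator that assigns equal mass $1/k$ to each row, so I would measure discrepancy by the row-level Kullback--Leibler distance
\begin{equation*}
d\big(\widetilde F_{p},F_{1/k}\big)=\sum_{r=1}^k p_{(r)}\ln\!\left(\frac{p_{(r)}}{1/k}\right),
\end{equation*}
and minimize it over the simplex subject to the two stated constraints.

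Next I would form the Lagrangian
\begin{equation*}
\sum_{r=1}^k p_{(r)}\ln\!\left(\frac{p_{(r)}}{1/k}\right)+\lambda\Big(\sum_{r=1}^k p_{(r)}\overline X_{(r)}-\overline X_{URSS}\Big)+\alpha\Big(\sum_{r=1}^k p_{(r)}-1\Big),
\end{equation*}
differentiate with respect to each $p_{(r)}$, and set the derivative to zero. This yields $\ln(k\,p_{(r)})+1+\lambda\overline X_{(r)}+\alpha=0$, so that $p_{(r)}\propto\exp(\pm\lambda\overline X_{(r)})$; eliminating $\alpha$ through the normalization $\sum_{r}p_{(r)}=1$ reproduces exactly the tilted form in \eqref{eta2}, with the sign of $\lambda$ absorbed into its definition and pinned down afterwards by the mean constraint. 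Strict convexity of the KL objective on the simplex guarantees that this stationary point is the unique global minimizer, so the calculation does not merely locate a critical point but the actual optimum.

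The step that deserves the most care, and which I view as the real subtlety here, is the identification of the reference weights, because an incorrect choice would silently change the answer. Had I instead spread the mass $p_{(r)}$ uniformly across the $m_r$ observations of row $r$ and measured the distance to the observation-level reference $1/n$, the derivative would acquire an extra $\ln(n/m_r)$ term and force $p_{(r)}\propto m_r\exp(\lambda\overline X_{(r)})$, which does \emph{not} match \eqref{eta2}. The stated form is therefore specific to treating the rows symmetrically against the reference $1/k$, consistent with the equal-weight representation $F=\frac{1}{k}\sum_{r}F_{(r)}$ that motivates \eqref{est2} in the first place.

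Finally I would confirm that the tilting parameter $\lambda$ is well defined. Writing $\overline X_{URSS}=\frac{1}{n}\sum_{r}\sum_{j}X_{(r)j}=\sum_{r}q_{m_r}\overline X_{(r)}$ exhibits $\overline X_{URSS}$ as a convex combination of the row means, so it lies between $\min_r\overline X_{(r)}$ and $\max_r\overline X_{(r)}$. Since $\lambda\mapsto\sum_{r}\tilde p_{(r)}(\lambda)\overline X_{(r)}$ is the derivative of the strictly convex log-partition function $\ln\sum_{r}\exp(\lambda\overline X_{(r)})$, it is strictly increasing with range equal to that open interval; hence the mean constraint admits a unique solution $\lambda$ (with $\lambda=0$ in the degenerate case where all row means coincide). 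This existence-and-uniqueness remark closes the argument.
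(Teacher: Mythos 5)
Your proposal is correct and takes essentially the same route as the paper: the paper's proof consists exactly of forming the row-level Lagrangian $\sum_{r=1}^k p_{(r)}\ln\bigl(p_{(r)}/(1/k)\bigr)+\lambda\bigl(\sum_{r=1}^k p_{(r)}\overline X_{(r)}-\overline X_{URSS}\bigr)+\alpha\bigl(\sum_{r=1}^k p_{(r)}-1\bigr)$ and minimizing over the $p_{(r)}$, which is precisely your central calculation. Your additional remarks---uniqueness via strict convexity, the sensitivity to the choice of reference weights ($1/k$ per row rather than $1/n$ per observation), and the existence and uniqueness of $\lambda$ via monotonicity of the log-partition derivative---are sound and go beyond what the paper states, but do not change the approach.
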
  
  \begin{proof} 
The results easily follow using the  Lagrange multipliers method and  minimizing
\begin{eqnarray}
\sum_{r=1}^k  p_{(r)}\ln \left( \frac{p_{(r)}}{1/k}\right)+\lambda (\sum_{r=1}^k  p_{(r)} \overline X_{(r)}-\overline{X}_{URSS})+\alpha(\sum_{r=1}^k p_{(r)}-1).
\label{LA1}
\end{eqnarray}
with respect to  $p_{(r)}$. 
   \end{proof}
 In Section \ref{BT-URSS}, we use  $\tilde {F}_p(x)=\displaystyle\sum_{r=1}^k   \frac{\tilde p_{(r)}}{m_r} \sum_{j=1}^{m_r} \I(X_{(r)j}\leq x)$  and propose a new bootstrapping algorithm to  resample from $X_{URSS}$ instead of the commonly used empirical DF.   Here again for  hypothesis testing problems involving  $H_0: \mu=\mu_0$ where $\mu$ is the population mean, minimization in  \eqref{LA1} is done subject to the condition $\displaystyle\sum_{r=1}^k  p_{(r)} \overline X_{(r)}=\mu_0$. }
\begin{remark}
If for  the observed URSS data all the  $m_r$s are large enough, then one can use ET estimators of $F_{(r)}$ by simply treating $X_{(r)j}$'s as a SRS of size $m_r$ from $F_{(r)}$ and constructing   the estimator $\widetilde{F}(t)= \frac{1}{k} \sum_{r=1}^k \widetilde F_{(r)}(t)$ for $F$. Here,   $\widetilde{F}_{(r)}(t)= \sum_{j=1}^{m_r} w_j(r) \I(X_{(r)j}\leq t)$ and $w_j(r)$s are  obtained subject to constraints $\sum_{j=1}^{m_r}w_j(r)=1$ and $\sum_{j=1}^{m_r}w_j(r) X_{(r)j}= \overline{X}_{(r)}$, for $r=1, \ldots, k$, using the following  Lagrange multipliers problems:
$$\sum_{j=1}^{m_r}  w_{j}(r)\ln \left( \frac{w_j(r)}{1/m_r}\right)+\lambda_r (\sum_{j=1}^{m_r}  w_{j}(r) \overline X_{(r)j}-\overline{X}_{(r)})+\alpha_r (\sum_{j=1}^{m_r} w_{j}(r)-1), \quad r=1, \ldots, k. $$
\end{remark}

%
%
%

\section{Bootstrapping URSS and  RSS}\label{BT-URSS}
\BL{ In this section, we propose  two new bootstrapping techniques to resample from a  balanced or unbalanced ranked set sample of size $n$. 
 The first algorithm is based on the ET estimator of $F$  in Lemma \ref{lem1} to resample  the entire URSS while the second one uses  the ET estimator of $F$ in Lemma \ref{lem2}  to   resample from  within each row separately. }
 We note that most of  the bootstrap methods developed for RSS are   based on the EDF and one can easily modify them  using  ET estimators of $F$. Monte Carlo simulation studies  indicate that  bootstrapping methods based on the ET estimators of $F$ perform better than their counterparts using the  EDF.  

%
%
%

\subsection{Bootstrapping Algorithm: EAT}

To  resample  from  the ET estimator of $F$ given by  $$\tilde {F}_p(x)=\displaystyle\sum_{r=1}^k \sum_{j=1}^{m_r} \tilde p_{(r)j} \I(X_{(r)j}\leq x),$$ where $\tilde{p}_{(r)j}$ is defined in \eqref{eta1} we proceed as follows:
\begin{itemize}
\item [1.] Assign probability $\tilde p_{(r)j}$ to each element $X_{(r)j}$ of ${\bf X}_{URSS}$. 
\item[2.] Randomly draw $X_1^\diamond,\ldots ,X_k^\diamond$ from ${\bf X}_{URSS}$ according to probabilities  $\{\tilde p_{(r)j}\}$, order them as $X_{(1)}^\diamond \leq\ldots \leq X_{(k)}^\diamond$ and retain $X^{*}_{(r)1}=X_{(r)}^\diamond$.
\item[3.] Repeat Step 2,  for $r=1,\ldots ,k$ and $j=1,\ldots ,m_r$ to generate a   bootstrap URSS $\left\{X^{*}_{(r)j}\right \}$.
\item[4.] Repeat  steps 2-3, $B$ times to obtain the bootstrap samples.
\end{itemize}

\BL{
One can easily validate the use of the ET estimator of $F$ for different bootstrapping purposes. For example, suppose we want to carry out a bootstrap  test for testing  $H_0: \mu=\mu_0$ against $H_a: \mu>\mu_0$,  where $\mu$ is the unknown parameter of interest.  Using  Hall (1992),    the Edgeworth expansion of the $p$-value for testing $H_0$ against $H_a$  based on a SRS of size $mk$ from the underlying population   with the test statistic    $T= \frac{\bar X-\mu_0}{S/\sqrt{mk}}$,  is given by
\begin{eqnarray}\label{pvalue-srs}
P=P(T \geq  t)=1-\Phi(t)-(mk)^{-1/2} q(t)\phi(t)+O(\frac{1}{mk}), 
\end{eqnarray}
where $q(\cdot)$ is a  quadratic function and  $\Phi(\cdot)$ and $\phi(\cdot)$ are the standard normal distribution and density functions,  respectively. We consider the problem for a balanced RSS case,  as the following argument  can also be applied to URSS data with some modifications.
Let $\{ X_{(r)j},  r=1, \ldots, k;~ j=1, \ldots, m\}$ be a balanced ranked set sample of size $mk$ from the underlying population with mean $\mu$.  We show that  the ET bootstrap approximation of  the sampling distribution of $T$ is in error by only $1/mk$ and    the $p$-value obtained through the EAT method has the desirable second order accuracy  
This is similar to results obtained in DiCiccio and Romano (1990). For more details see Efron (1981) and Feuerveger et al.\ (1999).  
\begin{proposition}
Suppose  $\{ X^*_{(r)j}, r=1, \ldots, k;~ j=1, \ldots, m\} $ is a bootstrap sample generated  from the EAT algorithm. Let  $T^*= \frac{(\bar X^*-\bar X)}{S^{*}/\sqrt{mk}}$ be the bootstrap test  for testing $H_0: \mu=\mu_0$  with p-value $P^*$,  where  $\bar X^*$  is the mean of the bootstrap sample obtained form the ET estimator of $F$ and $S^{2*}=\frac 1k \sum_{r=1}^k S^{2*}_{(r)}$ with $S^{2*}_{(r)}= \frac{1}{m-1}\sum_{j=1}^m(X^*_{(r)j}-\bar{X}^*_{(r)})^2$. 
Then, 
\begin{eqnarray}
P-P^*=O(\frac{1}{mk}),
\end{eqnarray} 
where $P$,   given by \eqref{pvalue-srs}, is the p-value  of the usual  $T$-test based on a simple random sample of comparable size $mk$ from the underlying population. 
\end{proposition}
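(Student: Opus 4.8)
The plan is to establish the result by comparing one-term Edgeworth expansions of the two $p$-values. For $P$ this expansion is already supplied by \eqref{pvalue-srs}, in which the quadratic $q(\cdot)$ multiplying the $(mk)^{-1/2}$ term is determined by the standardized third cumulant of the studentized statistic $T$. My aim is to show that $P^*$, computed in the bootstrap world generated by the EAT algorithm, admits an expansion of exactly the same shape, $P^*=1-\Phi(t)-(mk)^{-1/2}\widehat q(t)\phi(t)+O_p((mk)^{-1})$, where $\widehat q$ is the empirical analogue of $q$ built from the resampling cumulants, and that $\widehat q(t)-q(t)=O_p((mk)^{-1/2})$ uniformly on compact $t$-sets. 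Subtracting the two expansions cancels the common $1-\Phi(t)$ term, while the two first-order corrections combine into $(mk)^{-1/2}\{\widehat q(t)-q(t)\}\phi(t)=O_p((mk)^{-1})$, which is the assertion.

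The first substantive step is to pin down the bootstrap world and to control the tilt. Conditionally on ${\bf X}_{URSS}$, Steps~2--3 of the EAT algorithm draw independent sets of size $k$ from $\widetilde F_p$ and retain their order statistics, so each bootstrap row $\{X^*_{(r)j}\}_{j=1}^{m}$ consists of i.i.d.\ copies of the $r$-th order statistic of a sample of size $k$ from $\widetilde F_p$, and distinct rows are independent; this reproduces the RSS structure with parent $\widetilde F_p$ in place of $F$. I would then show that the Lagrange multiplier in \eqref{eta1} satisfies $\lambda=O_p((mk)^{-1/2})$: under $H_0$ the imposed constraint equals $\mu_0$ while $\overline X_{URSS}-\mu_0=O_p((mk)^{-1/2})$, so a first-order expansion of \eqref{eta1} gives $\widetilde p_{(r)j}=\tfrac1n\{1+O_p((mk)^{-1/2})\}$ uniformly (when the constraint is taken at $\overline X_{URSS}$ the tilt vanishes and this is immediate). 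Consequently every tilted resampling moment agrees with the corresponding ordinary pooled RSS moment up to $O_p((mk)^{-1/2})$. Combining this with the balanced-RSS identity $F=\tfrac1k\sum_{r=1}^k F_{(r)}$, which makes the pooled estimator consistent for $F$, the bootstrap cumulants that enter $\widehat q$ differ from the population cumulants entering $q$ by $O_p((mk)^{-1/2})$, and $S^{*2}=\tfrac1k\sum_{r=1}^k S^{2*}_{(r)}$ is consistent for the same variance that studentizes $T$.

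With these ingredients I would invoke a studentized-statistic Edgeworth expansion for the independent, non-identically-distributed bootstrap array to obtain the displayed expansion of $P^*$ with coefficient $\widehat q$, then feed in the moment-matching bounds of the previous paragraph to conclude $\widehat q-q=O_p((mk)^{-1/2})$ and hence $P-P^*=O_p((mk)^{-1})$. The parallel with the classical bootstrap-$t$ argument (DiCiccio and Romano, 1990; Hall, 1992) is then exact, the only genuinely new feature being the row-wise order-statistic structure carried by the ranked set sample.

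The main obstacle is the rigorous justification of the Edgeworth expansion for $T^*$ itself. Two difficulties arise. First, $\widetilde F_p$ is a discrete distribution supported on the $n$ observed points, so Cram\'er's smoothness condition fails and the classical continuous-distribution expansion does not apply verbatim; following Hall (1992) I would argue that, because each atom carries mass $O(n^{-1})$ and $T^*$ is a smooth function of sample means, the lattice contribution is absorbed into the $O_p((mk)^{-1})$ remainder. Second, the summands are order statistics of resampled sets rather than i.i.d.\ variates, so the expansion must be developed for an independent but non-identically-distributed triangular array; verifying the attendant Cram\'er-type and moment conditions --- using the boundedness of the retained order statistics on the finite support of $\widetilde F_p$ and the nondegeneracy of $S^{*2}$ --- is where essentially all of the technical effort lies.
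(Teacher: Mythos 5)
Your proposal is correct and takes essentially the same route as the paper: both compare the one-term Edgeworth expansion \eqref{pvalue-srs} for $P$ with a matching expansion $P^*=1-\Phi(t)-(mk)^{-1/2}\widehat q(t)\phi(t)+O\left(\frac{1}{mk}\right)$ for the bootstrap $p$-value under the exponentially tilted distribution, and then subtract, following Hall (1992). If anything, your write-up is more careful than the paper's own proof, which never states the key cancellation ingredient $\widehat q(t)-q(t)=O_p\left((mk)^{-1/2}\right)$ that you derive from the tilt bound $\lambda=O_p\left((mk)^{-1/2}\right)$, and which silently treats the resample as i.i.d.\ from the tilted EDF, ignoring the discreteness (failure of Cram\'er's condition) and the row-wise order-statistic structure that you correctly identify as the real technical burden.
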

\begin{proof}
For simplicity, we write the resampled data as 
$\{X_{1}^*,\ldots,X_{km}^*\}$. In order to test $H_0: \mu=\mu_0$, and  to ensure that the null hypothesis is incorporated into the  ET estimator of $F$, we introduce the Lagrange multipliers for the constraints $\sum_{i=1}^n \tilde p_i=1$  and $\sum_{i=1}^n \tilde p_iX^*_i=\mu_0$, where the weights $\tilde p_i$ are obtained as  
\begin{eqnarray}
\tilde p_{i}(\mu_0)=\frac{\exp(\lambda(\mu_0) X^*_{i})}{\sum_{j=1}^{mk} \exp(\lambda(\mu_0) X^*_{j})}, ~ i=1,\ldots,km,
\end{eqnarray}
and $\lambda(\mu_0)$ is the coefficient calculated from  $\sum_{i=1}^n \tilde p_i(\mu_0) X^*_i=\mu_0$. One can easily show that  $X^*_i$s are generated  from   
\begin{eqnarray}
dF_p(x)=e^{\{A(\lambda (\mu))-\lambda(\mu) x\} }dF_n(x),
\end{eqnarray}
where 
 $A(\lambda(\mu))=\log(\frac{1}{mk} \sum_{i=1}^k exp(\lambda(\mu)  X_i))$.  To obtain the ET estimator of $F$ under the null hypothesis  we must  have  
\[
\mu_0=A'(\lambda(\mu_0))=\frac{\sum_{i=1}^{mk} X_i exp(\lambda(\mu_0) X_i)}{\sum_{i=1}^{mk} exp(\lambda(\mu_0) X_i)}.
\]
Therefore, one can use the bootstrap  test statistic $T^*= \frac{(\bar X^*-\bar X)}{S^{*}/\sqrt{mk}}$  for testing $H_0: \mu=\mu_0$  where  $\bar X^*$  is the mean of the bootstrap sample obtained form the ET estimator of $F$ and $S^{2*}=\frac 1k \sum_{r=1}^k S^{2*}_{(r)}$ with $S^{2*}_{(r)}= \frac{1}{m-1}\sum_{j=1}^m(X^*_{(r)j}-\bar{X}^*_{(r)})^2$.   Following Hall (1992) and using the Edgeworth expansion, the $p$-value for testing  $H_0: \mu=\mu_0$ against $H_0:\mu >\mu_0$ using the  bootstrap test statistic $T^*$  is given by 
\begin{eqnarray}
P^*=P(T^* \geq t|F_p)=1-\Phi(t)-(mk)^{-1/2} \widehat q(t)\phi(t)+O(\frac{1}{mk}), \nonumber
\end{eqnarray}
where $\widehat q$ is a  quadratic function. Now, the results follows from \eqref{pvalue-srs}.
\end{proof}
}
%
%
%
%

\subsection{Bootstrapping  Algorithm EAR}
The idea behind this  method is to use the ET estimator of $F$ given by  
$$\tilde {F}_p(x)=\displaystyle\sum_{r=1}^k   \frac{\tilde p_{(r)}}{m_r} \sum_{j=1}^{m_r} \I(X_{(r)j}\leq x), $$
where $\tilde p_{(r)}$ is defined in \eqref{eta2}. To this end we proceed as follows:
\begin{itemize}
\item[1.] Assign  probabilities $\tilde p_{(r)}$ to each row $\mathcal{X}_r$ of ${\bf X}_{URSS}$, $r=1, \ldots, k$. 
\item[2.] Select a row randomly using $\tilde p_{(r)}$ and select an observation randomly from that row. 
\item[3.] Continue step 2 for $k$ times to obtain $k$ observations. 
\item[4.] Order them as $X_{(1)}^\diamond\leq\ldots \leq X_{(k)}^\diamond$ and retain $X^{*}_{(r)1}=X_{(r)}^\diamond$
\item[5.] Perform Steps 2--4 for $m_r$ and obtain $\{X^{*}_{(r)1},\ldots,X^{*}_{(r)m_r}\}$.
\item[6.] Perform Steps 2--5 for $r=1,\ldots ,k$.
\item[7.] Repeat  steps 2--6, $B$ times to obtain the bootstrap samples.
\end{itemize}
\section{Monte Carlo Study}\label{simulation}
\BL{In this section, we compare the finite sample performance of  out nonparametric EAT and EAR resampling methods  with a parametric bootstrap (PB) procedure. The PB method uses a  parametric test (PT) with an  asymptotic normal distribution  to test the hypothesis $H_0: \mu=\mu_0$, where $\mu$ is the unknown parameter of interest and $\mu_0$ is a known constant.}   The resampling is performed using B=500 resamples and the entire experiment is then replicated 2000 times. We use  several RSS and URSS designs with different sample sizes  when  the set size is chosen to be $k=5$. We also conducted unreported simulation studies for other values of $k$ and we observed similar performance that we summarize below. 

 The  RSS designs that we consider  are written as  $D=(m_{1}, m_{2}, \ldots, m_{5})$ with $n_{D}=\sum_{r=1}^k m_r$. For example, the first design
is balanced with $k=5$ and $m_r=5$ observations per stratum, which is denoted by  $$D_1=(5,5,5,5,5)\quad \text{with} \quad n_{D_1}= 25.$$
 Similarly, we define the following designs,
\begin{align*}
&D_2=(8,3,3,2,4)\quad \text{   with}\quad   n_{D_2}= 25, \nonumber\\
&D_3=(3,2,5,8,3)\quad~  \text{with}\quad n_{D_3}= 21,\nonumber\\
&D_4=(3,10,3,3,3)\quad\text{with}\quad  n_{D_4}= 22,\nonumber\\
&D_5=(4,2,3,3,8)\quad~\text{with}\quad  n_{D_5}= 24.\nonumber
\end{align*}
We obtain samples from the Normal(0,1),  Logistic(1,1) and  Exponential(1) distributions.

\subsection{ Testing a hypothesis about  the population  mean }
\BL{We first proceed with the following proposition.
 \begin{proposition} \label{p-bt1} Suppose $F$ is the DF of the variable of interest in the underlying population with $\int x^2dF(x)<\infty$. Let  $\widehat F_{(r)}$ be  the EDF of the $r^\text{th}$ row of a balanced RSS data  and $\mu$ represent the population mean.
 Then $(\vartheta_1,\ldots ,\vartheta_k)$, with $\vartheta_i= \mu (\widehat F_{(i)})-\mu (F_{(i)})$, converges in distribution to a multivariate normal distribution with  the mean vector zero and
the  covariance matrix $\Sigma=diag(\sigma^2(F_{(1)})/m,\ldots ,\sigma^2(F_{(k)})/m)$ where $\sigma^2(F_{(i)})=\int (X-\mu_{(i)})^2dF_{(i)}$ and $\mu_{(i)}=\int xdF_{(i)}(x)$.
 \end{proposition}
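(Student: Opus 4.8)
The plan is to recognize that the mean functional evaluated at an empirical distribution is simply the corresponding row sample mean, so that $\vartheta_i = \mu(\widehat F_{(i)}) - \mu(F_{(i)}) = \bar X_{(i)} - \mu_{(i)}$, where $\bar X_{(i)} = \frac{1}{m}\sum_{j=1}^m X_{(i)j}$ and $\mu_{(i)} = \int x\, dF_{(i)}(x)$. Each $\vartheta_i$ is thus a centered sample mean of the i.i.d.\ observations $X_{(i)1}, \ldots, X_{(i)m}$ drawn from $F_{(i)}$, and the whole problem reduces to a joint central limit theorem for $k$ independent row means.

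First I would verify the moment condition needed for the CLT, namely that $\sigma^2(F_{(i)}) = \int (x-\mu_{(i)})^2\, dF_{(i)}(x) < \infty$ for each $i$. This follows from the hypothesis $\int x^2\, dF(x) < \infty$ together with the fundamental RSS identity $F = \frac{1}{k}\sum_{r=1}^k F_{(r)}$ used in Lemma \ref{lem2}: integrating $x^2$ against both sides gives $\frac{1}{k}\sum_{r=1}^k \int x^2\, dF_{(r)}(x) = \int x^2\, dF(x) < \infty$, and since every summand is nonnegative, each $\int x^2\, dF_{(r)}(x)$ is finite, hence so is $\sigma^2(F_{(r)})$.

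Next I would apply the classical univariate CLT within each row: since $X_{(i)1}, \ldots, X_{(i)m}$ are i.i.d.\ from $F_{(i)}$ with mean $\mu_{(i)}$ and finite variance $\sigma^2(F_{(i)})$, we have $\sqrt{m}\,\vartheta_i = \sqrt{m}\,(\bar X_{(i)} - \mu_{(i)}) \xrightarrow{d} N\!\left(0, \sigma^2(F_{(i)})\right)$ as $m\to\infty$. To upgrade these marginal statements to a joint one, I would invoke the Cram\'er--Wold device (equivalently, factorize the characteristic function): for an arbitrary vector $(a_1, \ldots, a_k)$, the linear combination $\sum_{i=1}^k a_i\sqrt{m}\,\vartheta_i$ is a sum of $k$ mutually independent terms, because distinct rows of a balanced RSS are independent by construction. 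Independence makes the variance of this combination equal to $\sum_{i=1}^k a_i^2\,\sigma^2(F_{(i)})$ with no cross terms, and the CLT for the sum yields a univariate normal limit with exactly that variance. Since this holds for every $(a_1, \ldots, a_k)$, the Cram\'er--Wold theorem gives joint convergence of $\sqrt{m}\,(\vartheta_1, \ldots, \vartheta_k)$ to the multivariate normal with mean zero and diagonal covariance $\operatorname{diag}\!\left(\sigma^2(F_{(1)}), \ldots, \sigma^2(F_{(k)})\right)$; rescaling by $\sqrt{m}$ reproduces the stated covariance $\Sigma = \operatorname{diag}\!\left(\sigma^2(F_{(1)})/m, \ldots, \sigma^2(F_{(k)})/m\right)$.

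The argument is essentially routine, so I do not anticipate a genuine obstacle; the only two points requiring care are the moment transfer in the first step, which is where the decomposition $F = \frac{1}{k}\sum_r F_{(r)}$ must be invoked rather than assuming that order statistics automatically inherit a finite variance, and the vanishing of the off-diagonal covariances, which is not an algebraic accident but a direct consequence of the independence of distinct rows in the ranked set sampling design. These two facts together carry all of the statistical content of the proposition.
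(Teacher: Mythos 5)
Your proof is correct, but there is nothing in the paper to compare it against: the authors state this proposition bare, with no proof whatsoever, and immediately use it to motivate the test statistic $T(X,\mu_0)$ in \eqref{urssa5}. Your argument therefore fills a gap rather than paralleling the paper's reasoning. The two points you single out as carrying the content are exactly the right ones: the moment transfer via the identity $F=\frac{1}{k}\sum_{r=1}^{k}F_{(r)}$ (each $\int x^2\,dF_{(r)}\leq k\int x^2\,dF<\infty$ because the summands are nonnegative), and the diagonal covariance, which follows from the mutual independence of all $X_{(r)j}$ in the RSS design rather than from any cancellation. One further point you handled implicitly but correctly: as literally written the proposition is an abuse of notation, since a limiting distribution cannot have a covariance matrix that depends on $m$; the precise statement is that $\sqrt{m}\,(\vartheta_1,\ldots,\vartheta_k)\overset{d}{\rightarrow} N_k\bigl(0,\mathrm{diag}(\sigma^2(F_{(1)}),\ldots,\sigma^2(F_{(k)}))\bigr)$ as $m\to\infty$, which is what you actually prove before ``rescaling.'' If you want to streamline the Cram\'er--Wold step, observe that $\sum_{i=1}^{k}a_i\sqrt{m}\,\vartheta_i=m^{-1/2}\sum_{j=1}^{m}Y_j$ with $Y_j=\sum_{i=1}^{k}a_i(X_{(i)j}-\mu_{(i)})$ i.i.d.\ across cycles $j$, so a single application of the univariate CLT suffices; your alternative route, factorizing the characteristic function over the $k$ independent, marginally convergent row terms, is equally valid because $k$ is held fixed.
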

  This proposition suggests to use the following test statistic for testing the hypothesis $H_0: \mu=\mu_0$
\begin{eqnarray}
T(X,\mu_0)=\frac{1}{k}\sum_{r=1}^k \left(\frac{\bar X_{(r)}-\mu_0}{S}\right)\overset{d}{\rightarrow} N(0,1),
\label{urssa5}
\end{eqnarray}
where 
\begin{eqnarray}
S^2= \frac{1}{k^2}\sum_{r=1}^k\frac{S^2(X_{(r)})}{m_r}.
\label{urssa6}
\end{eqnarray}
  The test statistic  $T(X,\mu_0)$, which is approximately Normal$(0,1)$ for large $k$,  is referred to as the PT in the rest of the work.    Ahn et al. (2014) consider  the Welch-type (WT) approximation to the distribution $T(X,\mu_0)$, where  the degree of freedom of the test  can be approximated using    
\begin{eqnarray}
S^2= \Big (\sum_{r=1}^k\frac{S^2(X_{(r)})}{m_r} \Big)^2\Big/ \Big (\sum_{r=1}^k\frac{S^4(X_{(r)})}{m^2_r(m_r-1)} \Big).
\label{df}
\end{eqnarray}
The nonparametric bootstrap tests using the EAT and EAR  methods are conducted based on the following  steps: } 
\begin{itemize}
  \item[1.]Let X be an URSS/RSS sample from $F$.
  \item[2.]Calculate $T=T(X,\mu_0)$, given in (\ref{urssa5}), under the null hypothesis $H_0: \mu=\mu_0$.
  \item[4.]Apply each of the resampling procedures on $X$  to obtain $X_b^{*}=\{X^{*}_{(r)j}\}_b$.
\item[5.] Calculate $T_b^*=T(X_b^{*},\mu_0)$, $b=1,\ldots,B$.
  \item[6.]Obtain the proportion of rejections via $\frac{\#\{T_b^*>T\}}{B}$ to estimate the  $p$-value.
\end{itemize}
We also performed the desired testing hypothesis using  PB  by generating URSS samples  from Normal(0,1), Logistic(1,1) and  exponential(1) distributions.  To perform PB test we use the following steps (for more details on PB method see   Efron and Tibshirani (1993)):
 \begin{itemize}
  \item[1.] Let X be a URSS sample from a distribution  $F_{\theta}$ where $\theta$ is the unknown parameter and let $\mu=E_{\theta}(X)$.
  \item[2.] Calculate $T=T(X, \mu_0)$, under the null hypothesis $H_0: \mu=\mu_0$. 
 \item [3.] Estimate $\theta$ from X and take a URSS from $F_{\widehat \theta}$,  
$X_b^{*}=\{X^{*}_{(r)j}\}_b$. 
  \item[4.]Calculate $T^*_{b}=T^*_{b}(X^*_b, \mu_0)$. 
  \item[5.] Obtain the proportion of rejections via $\frac{\#\{T_{b}^*>T\}}{B}$ to estimate the $p$-value.
\end{itemize}
%
To conduct the parametric bootstrap we estimated the population mean using  the sample mean  and used $\sigma=1$. Subsequently, we  generated samples from the N($\bar{x}$, 1), Logistic($\bar{x}$, 1) and Exponential($\bar{x}$) distributions.
%
%
Table \ref{2observed_alpha-normal} displays the observed $\alpha$ levels. The parametric bootstrap (PB) method is accurate and the estimated $\alpha$ levels are close to the
 nominal level 0.05. The PT  test is liberal and its  approximated $p$-value is higher than the nominal level,  specially under exponential distribution.  \BL{We observe that the  WT test is a bit conservative under  the normal  and logistic distributions, i.e., the approximated $p$-values are lower than the nominal level}.  The observed $\alpha$ levels for  EAR follow the PB method closely and  they are less liberal than the PT under the exponential distribution. 

\begin{table}[htp]
\centering
\caption{Observed $\alpha$-levels of the proposed tests for testing  $H_0:\mu=0$ under the Normal distribution and $H_0: \mu=1$ for the Exponential and Logistic distributions.}\vspace{0.3cm}
\begin{tabular}{l  c c c c c c cc }
\hline\hline &&\multicolumn{1}{c}{ PT}&\multicolumn{1}{c}{ WT}&   \multicolumn{1}{c}{EAT} &  \multicolumn{1}{c}{EAR}  & \multicolumn{1}{c}{PB} & \\
\hline\hline
 &$D_1$ &0.062&0.041& 0.056& 0.052&0.050  \\
&$D_2$ &0.078&0.039& 0.054& 0.056&0.054  \\
N(0, 1)&$D_3$ &0.072&0.038&  0.046& 0.047&0.049 \\
&$D_4$ &0.071&0.033&  0.057& 0.058&0.054\\
&$D_5$ &0.064&0.039&  0.043& 0.045&0.047 &\\

\hline
&$D_1$ &0.107 &0.071& 0.081& 0.080&0.051  \\
&$D_2$ &0.133&0.072&  0.076& 0.079&0.049  \\
Exponential (1)&$D_3$ &0.132&0.081& 0.089& 0.090&0.054  \\
&$D_4$ &0.131&0.073&  0.098& 0.094&0.050  \\
&$D_5$ &0.098&0.074&  0.058& 0.055&0.053  \\

\hline
&$D_1$ &0.052&0.042& 0.05& 0.051& 0.047  \\
&$D_2$ &0.076&0.041& 0.058&  0.059& 0.050  \\
Logistic (1, 1)&$D_3$ &0.065&0.033&  0.048& 0.050&0.046  \\
&$D_4$ &0.068&0.034&  0.059& 0.057& 0.051  \\
&$D_5$ &0.059&0.034&  0.043& 0.044& 0.041  \\

\hline
\end{tabular}
\label{2observed_alpha-normal}
\end{table}

Table \ref{2shift-normal} displays  the estimated power values under shift alternatives $H_a: \mu=\mu_0+\delta$ with $\delta\neq 0$.  We used 95\% percentile bootstrap confidence intervals
for $\mu$, using  EAT and EAR to obtain the power of the test statistics at $\alpha=0.05$.  The entries of these tables are the proportion of times that the bootstrap confidence intervals do not cover zero.  \BL{Compared with PT, both the EAT and EAR methods lead to high powers}, hence they  can be nominated to conduct appropriate tests. The results of other simulation studies (not presented here) show similar behavior for other values of $k$ such as $k=2, 3, 8, 10$. We also considered different sample sizes. The better performance of the proposed methods are apparent for small and relatively small sample sizes (which often happens in practice for RSS) and they perform similarly when the sample size gets very large for a fixed set size.

\begin{table}
\centering
\caption{Power comparison for the proposed tests under location shift.}
 \scalebox{0.82}{
\begin{tabular}{l c c  ccc c  ccccc  ccccc  }\\
\hline\hline 
\multicolumn{2}{c}{} & \multicolumn{5}{c}{Normal dist.}& \multicolumn{5}{c}{Exponential  dist.} & \multicolumn{5}{c}{Logistic dist.}\\
\hline
$\delta$      &$D$  & \multicolumn{1}{c}{PT} & WT&  \multicolumn{1}{c}{ETA}& \multicolumn{1}{c}{ETR} & \multicolumn{1}{c}{PB}
                   & \multicolumn{1}{c}{PT} &WT&   \multicolumn{1}{c}{ETA}& \multicolumn{1}{c}{ETR} & \multicolumn{1}{c}{PB}
                    & \multicolumn{1}{c}{PT} &WT&   \multicolumn{1}{c}{ETA}& \multicolumn{1}{c}{ETR} & \multicolumn{1}{c}{PB}\\
\hline\hline 

0.1&$D_1$ & 0.148&0.097& 0.152& 0.145&  0.138 &0.229&0.148&   0.222& 0.208&   0.209 &0.076& 0.049& 0.088& 0.088&   0.077 \\
&$D_2$      &0.143&0.069&  0.140& 0.142&   0.139 &0.227&0.093&   0.216& 0.212&   0.208 &0.116&0.052&   0.118& 0.125&   0.112\\
&$D_3$ &0.145&0.061&   0.147& 0.150&  0.142 &0.255&0.130&   0.255& 0.241&   0.242 &0.122&0.037&   0.130& 0.128&  0.120\\
&$D_4$ &0.155&0.057&   0.156& 0.164&  0.149 &0.216&0.096&   0.216& 0.204&   0.205 &0.112&0.032&   0.118& 0.120&   0.116\\
&$D_5$ &0.141&0.064&   0.142& 0.141& 0.142 &0.190&0.143&   0.187& 0.176&  0.164 &0.108&0.034&   0.106& 0.102&   0.104\\
\hline

0.2&$D_1$ &0.389&0.297&   0.384& 0.388&  0.382 &0.416&0.304&   0.412& 0.388&   0.380 &0.162&0.102&  0.177& 0.184&   0.157 \\
&$D_2$ &0.340&0.185&   0.337& 0.344&  0.333 &0.375&0.180&   0.375& 0.359&   0.347 &0.175&0.085&   0.183& 0.191&   0.176 \\
&$D_3$ &0.333&0.143&   0.339& 0.335&  0.327  &0.405&0.235&   0.399& 0.385&   0.386 &0.159&0.057&  0.174& 0.175&   0.158\\
&$D_4$ &0.336&0.144&   0.336& 0.337&  0.336  &0.381&0.172&   0.379& 0.363&   0.360   &0.147&0.058&  0.155& 0.158&   0.147 \\
&$D_5$ &0.308&0.168&   0.310& 0.315&  0.312 &0.190&0.286&   0.187& 0.176&  0.164  &0.137&0.064&  0.141& 0.134&   0.139 \\
\hline

0.3&$D_1$ &0.696&0.600&   0.698& 0.684&  0.694 &0.644&0.500& 0.650& 0.618&   0.603 & 0.294&0.215& 0.291& 0.302&   0.282\\
&$D_2$ &0.571&0.351&   0.571& 0.569&  0.559 &0.553&0.292&   0.563& 0.538&   0.517 &0.258&0.145&  0.261& 0.261&   0.252\\
&$D_3$ &0.561&0.284&   0.564& 0.566&  0.549 &0.604&0.347&   0.598& 0.581&   0.568 &0.252&0.093&  0.264& 0.264&   0.249\\
&$D_4$ &0.569&0.302&   0.566& 0.565&  0.559 &0.524&0.281&   0.518& 0.520&   0.501 &0.223&0.102&   0.229& 0.232&   0.227\\
&$D_5$ &0.557&0.355&   0.549& 0.556&  0.541  &0.640&0.476&   0.621& 0.592&   0.573 &0.250&0.129&  0.251& 0.252&   0.243\\
  \hline
\end{tabular}}
\label{2shift-normal}
\end{table}

  \begin{table}
  \centering
  \caption{Observed $\alpha$-levels for the proposed tests  for testing  $H_0:\mu=0$ for normal distribution and  $H_0:\mu=1$ for the exponential and logistic distributions, under imperfect ranking.}
  \vspace{0.3cm}
  \scalebox{0.8}{
  \begin{tabular}{c  c c ccc  ccccc }
  \hline\hline & \multicolumn{5}{c}{$\sigma_\epsilon=0.5$}& \multicolumn{5}{c}{$\sigma_\epsilon=1$}\\
  \hline 
  $D$     &  PT  &ETA&ETR & IETA&IETR  &  PT  &ETA&ETR & IETA&IETR\\
  \hline \hline
  & \multicolumn{10}{c}{\textbf{Normal Distribution}}\\
  \hline
  
      $D_1$ & 0.056& 0.054& 0.054& 0.053& 0.056&0.069& 0.068& 0.066& 0.066& 0.068\\
     $D_2$ & 0.072& 0.072& 0.070& 0.070& 0.073& 0.074& 0.077& 0.081& 0.071& 0.077   \\
     $D_3$ &0.067& 0.066& 0.069& 0.067& 0.067& 0.087& 0.081& 0.079& 0.081 &0.077 \\
     $D_4$ & 0.058& 0.057& 0.057& 0.060& 0.056 & 0.068 &0.070& 0.066& 0.060 &0.066 \\
     $D_5$ &0.067& 0.063& 0.067& 0.066& 0.065&0.067& 0.070& 0.069& 0.069& 0.066 \\
     \hline & \multicolumn{10}{c}{\textbf{Exponential Distribution}}\\
     \hline 
 
      $D_1$ &0.073& 0.065& 0.068 &0.068& 0.068 & 0.067 &0.059 &0.060 &0.059& 0.056\\
   $D_2$ & 0.084 &0.076& 0.079& 0.077& 0.078&0.083 &0.078 &0.075 &0.078 &0.074\\
   $D_3$ & 0.099 &0.094 &0.094& 0.093& 0.092&0.063& 0.058& 0.063 &0.053& 0.051 \\
   $D_4$ & 0.103& 0.100& 0.099& 0.096 &0.097 &0.076 &0.082 &0.076 &0.068 &0.070\\
   $D_5$ &  0.078& 0.069& 0.070 &0.069& 0.069 &0.071& 0.067 &0.066 &0.059& 0.065\\ 
     \hline & \multicolumn{10}{c}{\textbf{Logistic  Distribution}}\\
     \hline  
   $D_1$ & 0.060& 0.061& 0.061& 0.062& 0.064& 0.058& 0.061 &0.061& 0.056 &0.061\\
  $D_2$ &   0.071& 0.074& 0.074& 0.070& 0.073& 0.075& 0.076 &0.079& 0.076 &0.079 \\
  $D_3$ &  0.077& 0.078& 0.079& 0.081& 0.079&0.071 &0.072 &0.072 &0.067 &0.071 \\
  $D_4$ &  0.078 &0.080 &0.080& 0.080& 0.078 &0.075& 0.079 &0.075 &0.075& 0.076 \\
  $D_5$ &   0.068& 0.069& 0.065 &0.067 &0.067&0.064& 0.060& 0.063& 0.064 &0.063\\
  \hline 
  \end{tabular}
 }
  \label{3observed_alpha-normal}
  \end{table}

\begin{table}
\centering \footnotesize 
\caption{Power comparison for the proposed tests under location shift and imperfect ranking with $\sigma_\epsilon=0.5$.}
 \scalebox{0.82}{
\begin{tabular}{cc   ccccc   ccccc  ccccc  }\\
\hline\hline 
\multicolumn{2}{c}{} & \multicolumn{5}{c}{Normal dist.}& \multicolumn{5}{c}{Exponential  dist.} & \multicolumn{5}{c}{Logistic dist.}\\
\hline
$\delta$      &$D$  &  PT  &EAT&EAR & IEAT&IEAR&  PT  &EAT&EAR & IEAT&IEAR&  PT  &EAT&EAR & IEAT&IEAR  \\
\hline\hline
 0.1&$D_1$&0.162& 0.162& 0.163& 0.168& 0.162& 0.218& 0.212& 0.203& 0.208& 0.202&0.090& 0.099& 0.101 &0.100& 0.105\\
&$D_2$&0.163 &0.161& 0.168& 0.170& 0.174& 0.211& 0.208 &0.201 &0.199 &0.195& 0.105& 0.114& 0.110& 0.112& 0.109\\
&$D_3$&0.143 &0.146 &0.152& 0.143 &0.150 &0.230 &0.223& 0.221& 0.225& 0.225&0.105 &0.116 &0.116& 0.113& 0.119\\
&$D_4$&0.149& 0.155& 0.154& 0.154 &0.161&0.222 &0.215& 0.212 &0.210& 0.208&0.105& 0.111& 0.116& 0.112& 0.114  \\
&$D_5$&0.158& 0.158 &0.159 &0.155& 0.155& 0.212& 0.190& 0.189 &0.193& 0.190&0.088& 0.090 &0.093& 0.094& 0.094 \\

\hline
0.2&$D_1$&0.394& 0.397& 0.399& 0.404& 0.40&0.413& 0.382& 0.381& 0.388& 0.381&0.160 &0.171& 0.169 &0.171& 0.166\\
&$D_2$ &0.349& 0.353 &0.355 &0.357 &0.358&0.379& 0.362& 0.364& 0.355 &0.350&0.159& 0.171& 0.172 &0.169& 0.172\\
&$D_3$&0.325& 0.340 &0.337 &0.333 &0.340&0.394 &0.373 &0.378& 0.374 &0.375&0.152 &0.157& 0.161& 0.156& 0.162 \\
&$D_4$ &0.332& 0.327& 0.333 &0.328 &0.338&0.373 &0.354& 0.358 &0.352 &0.351&0.169& 0.171 &0.176& 0.179 &0.181\\
&$D_5$  &0.326& 0.322 &0.328& 0.328 &0.334&0.412 &0.374 &0.374 &0.372& 0.367&0.148 &0.156& 0.151 &0.152& 0.153\\
\hline

0.3&$D_1$&0.709& 0.708 &0.704 &0.709 &0.706 &0.643 &0.615 &0.609 &0.607 &0.607&0.303& 0.310& 0.309& 0.309& 0.312\\
&$D_2$ &0.584 &0.588& 0.586& 0.584 &0.588&0.517 &0.498 &0.498 &0.493 &0.482&0.259& 0.269& 0.273 &0.277& 0.275\\
&$D_3$ &0.556 &0.563 &0.561 &0.557 &0.556&0.594 &0.571 &0.570 &0.569 &0.566&0.238 &0.249 &0.254 &0.250 &0.255\\
&$D_4$ &0.571 &0.570& 0.565 &0.565 &0.569&0.530 &0.516 &0.518 &0.508 &0.507&0.247 &0.249 &0.249 &0.248 &0.257 \\
&$D_5$ &0.558 &0.563 &0.555& 0.556 &0.561&0.619 &0.572 &0.574 &0.568 &0.563&0.247 &0.255& 0.252& 0.255 &0.251\\
  \hline
\end{tabular}}
\label{imperfectpower}
\end{table}


\subsection{Imperfect ranking}\label{imperfect}
\BL{In this section, we compare the finite sample performance of our proposed bootstrapping techniques with the PB   under imperfect ranking cases. 
In order to produce the imperfect URSS/RSS samples, we use the model proposed by  Dell and Clutter (1972). Let $X_{[i]j}$ and $X_{(i)j}$ denote the judgment and true order statistics, respectively.} Suppose
\[
X_{[i]j}=X_{(i)j}+\epsilon_{ij},~~\epsilon_{ij}\sim N(0,\sigma_\epsilon),
\]
where $X_{(i)j}$ and $\epsilon_{ij}$ are independent. 

Using imperfect URSS with $\sigma_\epsilon=0.5$ and 1,  we report the observed significance levels for testing $H_0: \mu=\mu_0$ against $H_a: \mu>\mu_0$ for different methods in Table \ref{3observed_alpha-normal}. These choices of $\sigma_{\epsilon}$ resulted in the  observed correlation coefficients of 0.89 and 0.70 between the ranking variable and the variable of interest, respectively. \BL{As compared with  the results under the  perfect ranking assumption, 
 the proposed methods seem to be  robust with respect to imperfect ranking. It was shown that the test under exponential distribution for the imperfect sampling is a bit liberal.  We also observe that imperfect ranking affects   the power of the tests since, as it is shown in Table \ref{imperfectpower},   by adding errors in ranking, the power of the proposed tests  decreases.  
 The importance of accurate ranking in RSS designs has been mentioned in several works. Frey, Ozturk and Deshpande (2007) considered nonparametric tests for the perfect judgment ranking. Li and Balakrishnan (2008) proposed several nonparametric tests to investigate  perfect ranking assumption. Vock and Balakrishnan (2011) suggested a Jonckheere-Terpstra type test statistic for perfect ranking in balanced RSS. These tests are further studied by Frey and Wang (2013) and compared with the most powerful test. 
 }

\BL{ 
 In order to derive  the theoretical  results under the imperfect ranking  assumption, one can proceed as follow. First, note that under the imperfect ranking the density function of characteristic of interest for the unit judged to be  ranked $r$ is no longer $f_{(r)}$. We  denote this density with $f_{[r]}$. One approach to derive the CDF  $F_{[r]}$ of the $r$th judgmental order statistic is  to use the following model
 \begin{eqnarray}
 F_{[r]}=\sum _{s=1}^k p_{sr} F_{(s)}(x), \label{emp2}
 \end{eqnarray}
 where $p_{sr}$ is the probability that the $s$th order statistic is judged to have  rank $r$, with  $\sum_{s=1}^{k}p_{sr}=\sum_{k=1}^{k}p_{sr}=1$. 
 \begin{lemma} Suppose  the imperfect ranking in the RSS design is such that 
 \[
 F_{[r]}(x)=\sum_{s=1}^{k}p_{sr}F_{(s)}(x),\quad \forall x\in\mathbb{R}. 
 \]
  For the  resampling technique EAT (or EAR)  under the imperfect ranking assumption, which is denoted by IEAR (or IEAR), we have  
  \[
  \underset{t\in \mathbb{R}}{\sup}|\widehat F^{*}_{<n>}(t)-F(t)|=0,
  \]
  where $\widehat F^{*}_{<n>}(t)$ is the EDF  of the resulting   bootstrap sample. 
 \end{lemma}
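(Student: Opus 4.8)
I read the displayed equality in the asymptotic sense: the plan is to show that, as $n\to\infty$ with each $m_r\to\infty$, $\sup_{t\in\mathbb{R}}|\widehat F^{*}_{<n>}(t)-F(t)|\to 0$ (almost surely, or at least in probability). The whole argument rests on a single deterministic identity. Since the mislabeling matrix $(p_{sr})$ is doubly stochastic, in particular $\sum_{r=1}^{k}p_{sr}=1$ for every true rank $s$, averaging the judgment-rank distributions with equal weights returns the parent distribution:
\[
\frac1k\sum_{r=1}^{k}F_{[r]}(x)=\sum_{s=1}^{k}\Big(\frac1k\sum_{r=1}^{k}p_{sr}\Big)F_{(s)}(x)=\frac1k\sum_{s=1}^{k}F_{(s)}(x)=F(x).
\]
Hence, although every row now carries i.i.d.\ observations from $F_{[r]}$ rather than from $F_{(r)}$, the per-row biases cancel once the rows enter the estimator with equal weights. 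This cancellation is the structural reason the EAT/EAR resamplers remain consistent under imperfect ranking, and it is the only place where the imperfect-ranking hypothesis is used.

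The remainder is a two-layer Glivenko--Cantelli argument. First I would note that, under imperfect ranking, the observations $\{X_{[r]j}\}_{j=1}^{m_r}$ are i.i.d.\ from $F_{[r]}$, so the classical Glivenko--Cantelli theorem gives $\sup_t|\widehat F_{[r]}(t)-F_{[r]}(t)|\to 0$ almost surely as $m_r\to\infty$, for each $r$; consequently the parent $\widetilde F_p$ used by the resampler converges uniformly to $\sum_r \widetilde p_{(r)}F_{[r]}$. Second, conditionally on the data the bootstrap observations are generated from $\widetilde F_p$ and then arranged into the balanced RSS structure of the algorithm, so applying the RSS averaging identity inside the bootstrap world together with a conditional Glivenko--Cantelli statement controls $\sup_t|\widehat F^{*}_{<n>}(t)-\widetilde F_p(t)|$. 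Chaining the two bounds through the triangle inequality in the sup norm reduces the claim to showing that the tilting weights tend to $1/k$, for then the target becomes $\frac1k\sum_r F_{[r]}=F$ by the identity above.

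It therefore remains to control the weights $\widetilde p_{(r)}$ of \eqref{eta2}. In the balanced design $m_r=m$ this is immediate: $q_{m_r}=1/k$, the tilting constraint $\sum_r p_{(r)}\bar X_{(r)}=\bar X_{URSS}$ is solved by $\lambda=0$, so $\widetilde p_{(r)}=1/k$ exactly and the target is $F$ with no further work. In the unbalanced design the constraint forces $\lambda\neq 0$; here I would use the finite second moment $\int x^2\,dF(x)<\infty$ assumed in Proposition \ref{p-bt1} to show that the row means $\bar X_{(r)}$ concentrate at $\mu_{[r]}=\int x\,dF_{[r]}(x)$, that the Lagrange multiplier $\lambda$ stays bounded, and that the tilted weights converge to a deterministic limiting profile, which is then fed back into the averaging identity.

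The hard part is precisely this combination of the weight analysis with the bookkeeping of the two randomness layers: one must upgrade pointwise to uniform convergence after the row empirical distributions, the tilting weights, and the bootstrap Monte Carlo error are superposed, and one must settle in which mode (almost surely versus in probability) the conclusion is asserted, since the conditional bootstrap Glivenko--Cantelli step most naturally delivers the in-probability version. Once the weights are shown to converge to $1/k$ (exactly, in the balanced setting the proposition addresses), the doubly stochastic identity closes the argument and the sup-norm distance to $F$ vanishes in the limit.
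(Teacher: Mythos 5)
Your proposal rests on exactly the same structural fact as the paper's proof --- the double stochasticity of $(p_{sr})$, which makes equal-weight averaging over judgment ranks reproduce the average over true ranks --- but you deploy it at a different level, and the resulting architecture is genuinely different. The paper applies the mixture relation directly to the bootstrap empirical distribution functions: it asserts $\tilde F^{*}_{[r]}(t)=\sum_{s=1}^{k} p_{sr}\tilde F^{*}_{(s)}(t)$ for the realized bootstrap EDFs, interchanges the two finite sums to conclude that the IEAT/IEAR bootstrap EDF coincides \emph{exactly} with the EAT/EAR one, and then disposes of the remaining perfect-ranking error via the decomposition $\tilde {\widehat F}^{*}_{<n>}(t)-F(t)= (\tilde {\widehat F}^{*}_{<n>}(t)-\tilde {\widehat F}(t))+(\tilde {\widehat F}^{*}_{n}(t)-F(t))=O(1/(mk))$. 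You instead apply the cancellation at the population level, $\frac{1}{k}\sum_{r=1}^{k} F_{[r]}=F$, and rebuild consistency from the ground up: classical Glivenko--Cantelli for each row, a conditional Glivenko--Cantelli statement for the resampling layer, and a separate analysis of the tilting weights. What the paper's route buys is brevity and a rate (the $O(1/(mk))$ inherited from the perfect-ranking case); what your route buys is stochastic honesty --- an identity between realized bootstrap EDFs cannot hold exactly in finite samples (it is a population-level relation imposed on empirical objects), whereas your version places the mixture relation where it is actually valid and transfers it to the empirical level by limit theorems. You also make explicit two points the paper passes over: that the ``$\sup=0$'' in the statement can only be meant asymptotically (the paper itself proves only $O(1/(mk))$), and that the argument needs the tilting weights to be uniform.

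One caveat, which applies equally to the paper: both arguments are really confined to the balanced case. Your observation that $\lambda=0$ and $\tilde p_{(r)}=1/k$ exactly when $m_r\equiv m$ is correct (uniform weights satisfy the mean constraint, and the Kullback--Leibler objective is strictly convex). But in a genuinely unbalanced design the limiting weights solve the tilting problem with constraint value $\sum_{r} q_r \mu_{[r]}\neq \mu$, hence do \emph{not} tend to $1/k$; fed back into your averaging identity they yield $\sum_{r} p^{\infty}_{(r)}F_{[r]}\neq F$, so your strategy cannot close for unbalanced data --- and, for the same reason, neither does the paper's proof, which silently uses equal weights $1/k$ and the balanced-sample rate $O(1/(mk))$. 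This is therefore not a defect of your proposal relative to the paper, but be aware that the ``deterministic limiting profile'' you propose to compute in the unbalanced case will not be the uniform one that your final step requires.
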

  \begin{proof}
  We first note that using the IEAT (or IEAR), we have    
 \[
 F^*_{[r]}(t)=\sum_{s=1}^{k}p_{sr}F^*_{(s)}(t),
 \] 
 where $\tilde F^*_{[r]}(.)$ and $\tilde F^*_{(r)}(t)$ are the EDF of the resulting bootstrap samples under the  IEAT (or IEAR)  and EAT (or EAR), respectively. One can easily show that
 \begin{eqnarray}
 \tilde {\widehat F}^{*}_{<n>}(t)=\frac{1}{k}\sum_{r=1}^{k} \tilde F^*_{[r]}(t)= \frac{1}{k}\sum_{r=1}^{k} \sum_{s=1}^{k}p_{sr}\tilde F^*_{(s)}= \frac{1}{k}
  \sum_{s=1}^{k} \sum_{r=1}^{k} (p_{sr}) \tilde F^*_{(s)}(t)=
  \frac{1}{k}
   \sum_{s=1}^{k}  \tilde F^*_{(s)}(t)= \tilde {\widehat F}_n^*(t), \nonumber
 \end{eqnarray}
 Hence,  we have
  \[
 \tilde {\widehat F}^{*}_{<n>}(t)-F(t)= (\tilde {\widehat F}^{*}_{<n>}(t)-\tilde {\widehat F}(t))+(\tilde {\widehat F}^{*}_{n}(t)-F(t))=O(\frac{1}{mk}),
  \] 
  and this completes the proof.
 \end{proof} 
}

 %
 %
 %
 
 \subsection{Comparison with the empirical likelihood method }
In  this section,  we compare the  performance of the bootstrap tests based on ET estimators  of $F$  with the  one based on the empirical likelihood estimator of $F$ which is already studied in the literature by  Baklizi (2009) and Liu et al. (2009). Empirical likelihood is an estimation method based on likelihood functions without having to specify a parametric family for the observed data.  Empirical
likelihood methodology has become a powerful and widely applicable tool for non-parametric  statistical inference and it has been  used under different sampling designs. For a comprehensive review of the  empirical likelihood  method and some of its variations see Owen (2001).  
For testing the null hypothesis $H_0:\mu=\mu_0$ using the empirical likelihood estimator of $F$ based on a balanced RSS sample, 
 Baklizi (2009) showed that under the  finite variance assumption
 \begin{eqnarray}
 C_0\, l(\mu_0)\overset{L}{\rightarrow}\chi^2_1,
 \end{eqnarray} 
 where 
 \begin{eqnarray}
  C_0= \frac{\sum_{r=1}^k \sigma^2_r+\sum(\bar{X}_{[r]}-\mu_0)^2}{\sum_{r=1}^k \sigma^2_r}\quad\text{and}\quad
  l(\mu_0)= 
   \frac{\{\sum_{r=1}^k\sum_{j=1}^{m} (X_{[r]j}-\mu_0)\}^2}{\sum_{r=1}^k\sum_{j=1}^{m} (X_{[r]j}-\mu_0)^2}.
  \end{eqnarray} 
However, this is a liberal test for small samples and it does not work for   URSS case.   Liu et al. (2009) proposed to use  the empirical likelihood  method for RSS data  by first  averaging  the observations of each cycle to construct  
  \[ \bar{\mathcal{X}}_j=\frac{1}{m} \sum_{r=1}^k X_{(r)j}, ~ j=1,\ldots,m. \]
  Then, by observing  that $\bar{\mathcal{X}}_j$ are i.i.d.\ samples from $F$,   Liu et al.\ (2009) constructed  the usual  empirical likelihood estimator of $F$ and used it for a testing hypothesis problem. As we show below this method does not perform well,  especially  for   RSS samples when the  number of cycles is small.   
  
   The following simulation study shows that  using  EAR  based on the ET estimator of $F$ can be used to overcome these difficulties. To this end, we consider a balanced RSS with small sample, i.e., $D_6=(2,2,2,2,2)$.  Figure \ref{qqp}, shows the  Q-Q plots of the  $p$-values  based on the EAR algorithm  (first column), and those  proposed by Baklizi (2009)  (second column) and Liu et al.\ (2009) (the third column), respectively for the normal distribution when $H_0: \mu=0$ and the exponential and logistic distributions for $H_0: \mu=1$.

\begin{figure}[htb!]\centering
\includegraphics[scale=.7]{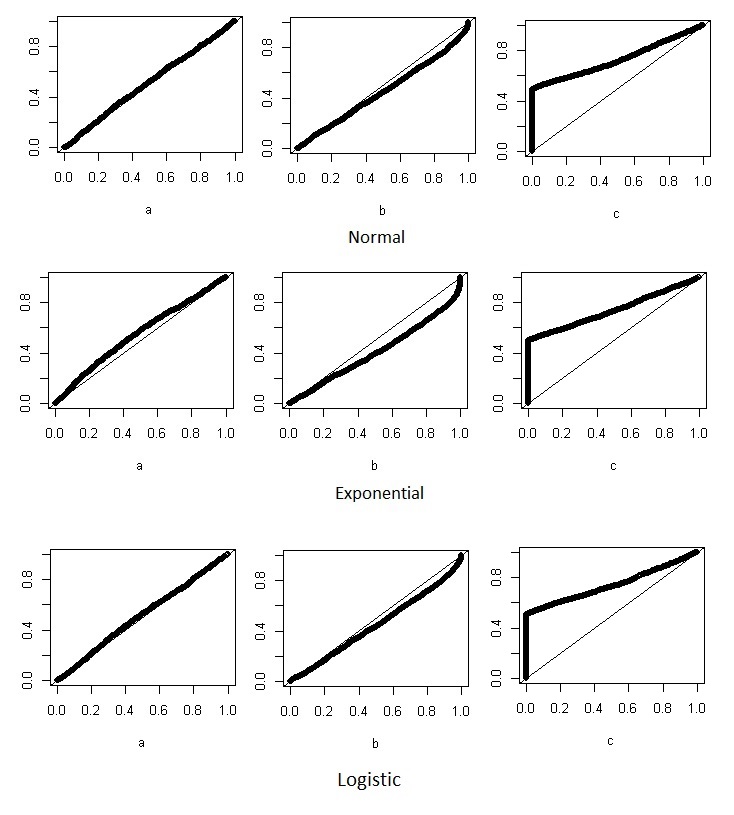}\caption{The Q-Q plot  for the p-values of the proposed  test statistic based on the EAR algorithm  (first column), and those  proposed by Baklizi (2009)  (second column) and Liu et al. (2009) (the third column), respectively for the normal distribution when $H_0: \mu=0$ and the exponential and logistic distributions for $H_0: \mu=1$.  
}
\label{qqp}
\end{figure}


\section{Real data application}\label{real data}
In this section,  we  use a data set  containing the birth weight and seven-month weight of 224 lambs along with
the mother's weight at time of mating, collected at the Research Farm of Ataturk University, Erzurum, Turkey.    Jafari Jozani and Johnson (2012) as well as Ozturk and Jafari Jozani (2014)   used this data set to study the performance of ranked set sampling in estimating the  mean,  the  total values and  quantiles   of the seven-month weight of these lambs. The measurement of  the weight of young sheep is usually  labor intensive due to their active nature, and measurement errors can be inflated due to this activity.  However, one can easily rank a small number of lambs based on their birth weights  or their mother's weights to perform a ranked set sampling design hoping that the RSS sample results in a more representative sample from the whole population. 
Here, we treat these 224 records as our population, with the goal of  a testing hypothesis problem about the mean of  the  weight distribution  of these 224 lambs at seven-month. We consider both perfect and imperfect ranking cases. For the  perfect ranking scenario, ranking is done based on the weight of lambs at seven-month. For the  imperfect ranking,  we consider two cases.  In the first case (Imperfect 1),  ranking is done based on the   the birth weight of the lambs.   The Kendall's $\tau$ between the seven-month weight  and the birth weight is  0.64. In the second case (Imperfect 2),  we  perform the  ranking process  based on the mother's weight at time of mating which results in a small  Kendal's $\tau$ of  0.41  between  the lambs weight at  seven-month  and mother's weight at the time of mating.
Summary statistics for  these variables for  the underlying population are presented in Table \ref{tbl:description}.   Figure \ref{yeild} shows the histogram of the  seven-month weight of  these lambs with a kernel density estimator of their weight distribution. We also present the scatter plots of the  birth weight and mother's weight of these lambs against  their weight at seven-months. We observe that there is a stronger association between the seven-month weight and   the birth weight of these lambs. So, we expect to observe a better results under the Imperfect 1 scenario.

\begin{table}[htdp]\small
\caption{Summary statistics for the values of the birth weight and seven-month weight of 224 lambs along with
the mother's weight at time of mating, collected at the Research Farm of Ataturk University, Erzurum, Turkey}
\begin{center}
\begin{tabular}{cccccccc}\hline\hline 
                            Variable  &    Min & $Q_1$ &  Median& Mean & $Q_3$ & Max &  $\sigma^2$ \\ \hline 
    Seven-month weight      &     20.30  & 25.50& 27.90 &  28.11 & 31.00& 40.50 &15.21 \\ \hline
                  Birth weight      &     2.50  & 3.87 & 4.40 &4.36& 4.80 & 6.70 & 0.63  \\ \hline
             Mother's weight    &    42.20 & 49.68&52.30&52.26& 55.10&63.70 &19.22  \\ \hline\hline
  \end{tabular}
\end{center}
\label{tbl:description}
\end{table}%

Table \ref{2observed_alpha-real2} presents the results of the analysis for a testing hypothesis problem to test $H_0: \mu=28.11$ based on different RSS sampling designs as in Section \ref{simulation}. Based on the  obtained  $\alpha$-level for each sampling design under the PT and EAR algorithm we observe that our  proposed bootstrap test using the ET estimator of the  DF  shows a satisfactory performance compared with the PT method in both perfect and imperfect ranking scenarios.

\begin{figure}[htb!]\centering
\includegraphics[width=7in]{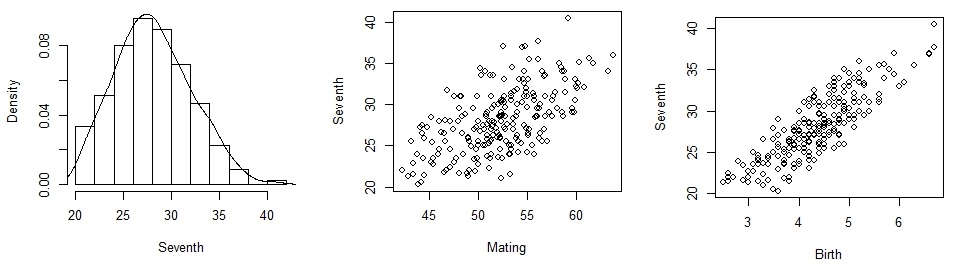}\caption{The histogram of the  values of  seven-month weight of  224  lambs with a kernel density estimator of their weight distribution as well as   the scatter plots of the  birth weight and mother's weight of these lambs  against  their weight at seven-months.}
\label{yeild}
\end{figure}

\begin{table}[h!]
\centering
\caption{The values  of  the observed $\alpha$-levels for testing  $H_0:\mu=28.11$ for the weight distribution of a population of 224 lambs   based on different perfect and imperfect  RSS design using the PT and EAR algorithm.}\vspace{0.3cm}
\begin{small}
\begin{tabular}{c c c c c c c }
\hline \hline
& Method & $D_1 $  & $D_2$ & $D_3$ & $D_4$ & $D_5$\\
\hline \hline
Perfect Ranking &  PT &0.062&0.094&0.085&0.076&0.083  \\
  &EAR &0.055&0.052&0.044&0.046&0.047 \\
\hline 
Imperfect 1&   PT & 0.064&0.082&0.091&0.087&0.094 \\
  &EAR &0.048& 0.047& 0.052& 0.047& 0.045\\
 \hline
  Imperfect 2&  PT &0.065&0.090&0.086&0.086&0.091 \\
  &EAR &0.048&0.042&0.051&0.043&0.046\\
 \hline\hline
\end{tabular}
\label{2observed_alpha-real2}
 \end{small}
\end{table}

\section{Concluding Remarks}\label{concluding}
We propose nonparametric estimators of the cumulative distribution of  a continuous random variable  using the ET empirical likelihood method based on ranked set sampling designs. The ET DF estimators are used to  construct new  resampling  techniques for URSS data.  We study different   properties of the proposed algorithms. For a hypothesis  testing  problem, we show that  the bootstrap test based  on exponential tilted estimators exhibit a small bias of order $O(n^{-1})$, which is  a very desirable property.   We compared the performance of our proposed techniques with those based on empirical likelihood. The latter are developed under the balanced RSS assumption and they are not applicable for URSS situation.  The results of  the  simulation studies  as well as a real data application show that the method based on ET estimators of the DF perform very well even for moderate or small sample sizes.


\section*{Acknowledgements}
\BL{ We gratefully acknowledge the constructive comments of the referees and the associate editor}. The research of  M.\ Jafari Jozani was  supported by the NSERC of Canada. The research of R.\ Modarres was supported in part by
the National Institute of Health, under the Grant No.\ 1R01GM092963-01A1.

\newpage


 \newpage

\end{document}